\title{Accountable Safety for Rollups}
  \author{Ertem Nusret Tas\footnote{Part of this work was done by the author as part of an internship at Celestia.}\inst{1} \and John Adler\inst{2} \and Mustafa Al-Bassam\inst{2} \and Ismail Khoffi\inst{2} \and David Tse\inst{1} \and Nima Vaziri\inst{3}}
    \institute{Stanford University\\
    \email{\{nusret,dntse\}@stanford.edu}\\
    \and
    Celestia\\
    \email{\{john,mustafa,ismail\}@celestia.org}
    \and
    Polychain Capital\\
    \email{nima@polychain.capital}
}
  \author{}
  \institute{}
\begin{document}

\maketitle

\begin{abstract}
Accountability, the ability to provably identify protocol violators, gained prominence as the main economic argument for the security of proof-of-stake (PoS) protocols. 
Rollups, the most popular scaling solution for blockchains, typically use PoS protocols as their parent chain. 
We define accountability for rollups, and present an attack that shows the absence of accountability on existing designs. 
We provide an accountable rollup design and prove its security, both for the traditional `enshrined' rollups and for sovereign rollups, an emergent alternative built on lazy blockchains, tasked only with ordering and availability of the rollup data.
\end{abstract}

\section{Introduction}
\label{sec:introduction}

\begin{quote}
“Man is condemned to be free; because once thrown into the world, he is responsible for everything he does. It is up to you to give (life) a meaning.”
-- Jean-Paul Sartre
\end{quote}

Infamously, blockchains have exhibited low transaction throughput.
This has been attributed to what is colloquially referred to as the \emph{blockchain scalability trilemma}~\cite{vitalik-sharding}, whereby at most two of throughput, decentralization, and security can be achieved simultaneously by a blockchain.
Various scaling proposals have emerged over the years, including state and payment channels~\cite{state_channels}, sidechains~\cite{sidechains}, Plasma~\cite{plasma}, and sharding~\cite{divide-and-scale}.
More recently, rollups~\cite{adler2020building,zk-rollup} have garnered interest as a promising method of scaling blockchains with minimal compromises.
At a high level, a rollup is a system where transactions are posted to a separate existing blockchain called the \emph{parent chain}, while state transitions are computed by rollup nodes that are distinct from the nodes maintaining the parent chain.
The security of rollups rests on the fact that any actor can verify the rollup state transitions by referring to the data on the parent chain, and does not have to ask a trusted third party.

Rollups have typically been classified into \emph{optimistic} and \emph{ZK} varieties depending on the method used by the rollup clients to succinctly validate the state transitions.
Both optimistic and ZK rollups have traditionally enshrined their parent chain as a \emph{settlement} layer that has the ultimate authority over the validity of the rollup state.
For instance, Ethereum rollups use the Ethereum beacon chain as their settlement layer, where the new state commitments and the associated validity or fraud proofs are posted. 
The beacon chain verifies these proofs using light-weight methods to validate the posted state commitments.
This relieves the burden of checking the validity of the rollup state from the rollup clients, which can then refer to the Ethereum beacon chain to learn the correct latest state commitment.
We will call such rollups \emph{enshrined rollups}.

Recently, \emph{sovereign rollups} emerged as an alternative construction, where no blockchain is enshrined as a settlement layer, and the clients are free to choose among different methods or services for validating and serving the rollup state.
For instance, clients of a sovereign rollup can opt to validate the state commitments themselves using validity proofs provided by the rollup nodes.
In this case, nodes of the parent chain are only tasked with maintaining the ordering and availability of the rollup data, and unlike in an enshrined rollup, are not enstrusted with verifying the correctness of the state.

Security of rollups depends on the security of the parent chain protocol (\eg, Proof-of-Stake Ethereum).
This protocol is often executed by staked rational agents.
To incentivize honest behavior among these agents, Buterin and Griffith~\cite{casper-ffg} advocate the notion of \emph{accountable safety}, which enhances the traditional security guarantees with the ability to identify malicious nodes in the event of a safety violation.
Accountability comes with economic punishments on the nodes accused of adversarial behavior, thus discourages safety attacks.
This \emph{trust-minimizing} notion of security has gained prominence as the main argument supporting the cryptoeconomic security of PoS Ethereum~\cite{gasper} and Cosmos zones based on Tendermint~\cite{tendermint} (\cf~\cite{forensics} for a comprehensive analysis of other protocols with accountable safety, also called forensic support).
Despite forming the bedrock of security for many L1 blockchains, accountable safety has remained an elusive notion for both enshrined and sovereign rollups.

\paragraph{Contributions.}
The main contributions of this work are
\begin{enumerate}
    \item definition of sovereign rollups in comparison to the enshrined rollups,
    \item definition of accountability for rollups,
    \item an attack demonstrating the lack of accountability on existing rollup architectures, and
    \item design and security analysis of accountable rollups.
\end{enumerate}

\begin{figure}[h]
    \centering
    \includegraphics[width=1\linewidth]{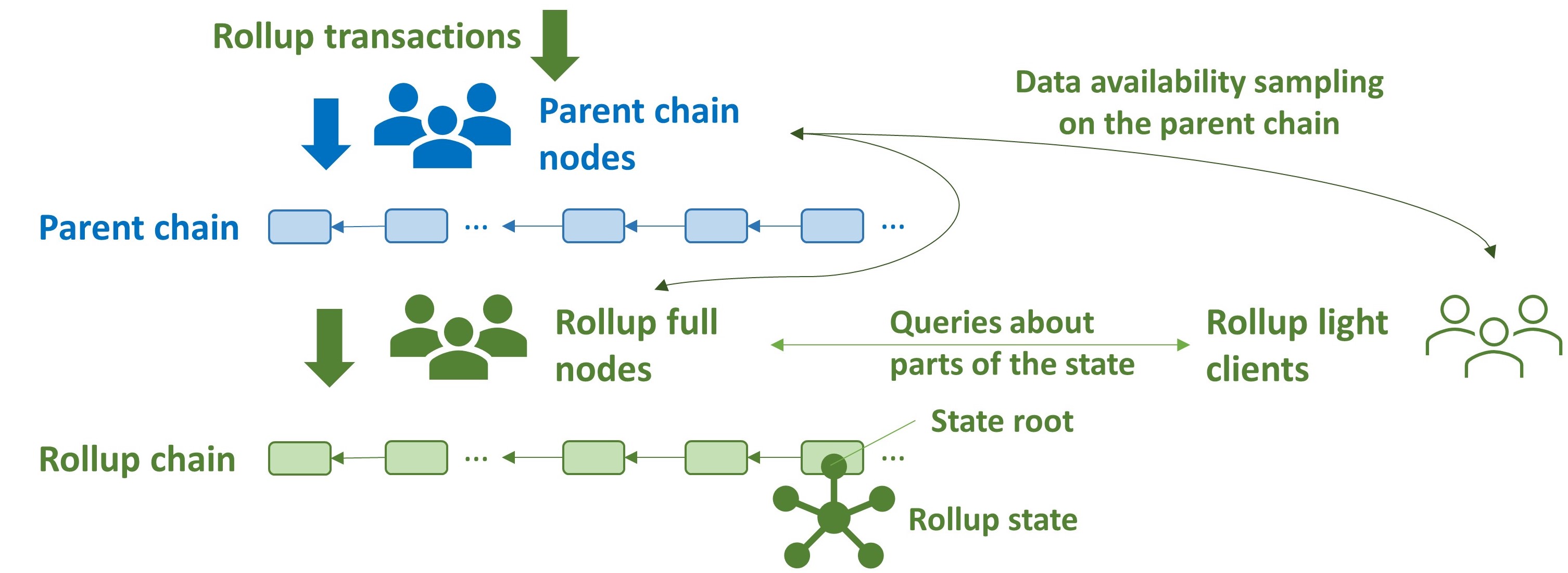}
    \caption{An accountable, sovereign rollup. Parent and the rollup chains are shown in blue and green respectively. Rollup full nodes download the rollup transactions ordered by the parent chain and construct the rollup chain. Light clients learn about the rollup state and the associated proofs from the full nodes. Both nodes do data availability sampling on the parent chain blocks for accountability.}
    \label{fig:rollup-diagram}
\end{figure}

Accountability requires that whenever the rollup clients obtain commitments to conflicting rollup states, the staked nodes of the parent chain responsible for this safety breach must be identified and slashed.
However, since the clients do not download the parent chain data that is not relevant to the rollup, they can be tricked into accepting rollup transactions within \emph{unavailable} parent chain blocks.
Once the parent chain identifies these blocks and restarts from an available block, the future rollup nodes might conflict with the old ones that downloaded their data from the unavailable, discarded blocks.
To avoid this, clients of an accountable rollup verify the data availability of the parent chain via light-weight methods such as data availability sampling~\cite{albassam2019fraud} (Figure~\ref{fig:rollup-diagram}).
They also validate the finality of the parent chain blocks by downloading their headers.
Using the rollup transactions within the finalized and available parent chain, the full nodes construct the rollup chain, whereas the light clients query them for the latest state commitment of the rollup chain. 
The honest full nodes give this commitment along with a proof of validity, which convinces the client that it is the latest valid commitment.

\paragraph{Outline.}
In Section~\ref{sec:preliminaries}, we define security and accountability for rollup full nodes and light clients.
In Section~\ref{sec:rollup-system}, we formulate a general model for all rollup types.
In Section~\ref{sec:accountable-rollup}, we show an attack on the accountable safety of the existing rollup designs, and describe how they can be made accountable.
In Section~\ref{sec:security}, we state and prove the theorems for accountable safety and liveness of rollups.
Finally, in Section~\ref{sec:analysis}, we provide an in-depth analysis of rollups in terms of their comparative security assumptions (Section~\ref{sec:comparison-security}), scalability (\ref{sec:analysis-throughput}), latency (\ref{sec:appendix-latency}), incentive-compatibility (\ref{sec:appendix-incentives}), privacy (\ref{sec:appendix-privacy}) and execution models (\ref{sec:appendix-deployment}).
We believe that they are of independent interest.

\paragraph{Related Work.}
To our knowledge, this is the first work to analyze sovereign rollups together with the enshrined varieties, and show accountability as a feature of these rollups.
Our analysis leverages insights from a line of work on systematizing blockchain scaling solutions.
Yee et al. characterized different notions of finality for ordering transaction, agreeing on the state and checkpointing blockchains against long range attacks~\cite{long-range-survey}, and identified the properties of an ideal L2 design~\cite{layer-2-finality}.
McCorry et al. proposed a general model capturing different methods for verifying the security of assets on validating bridges, and identified the core research problems for designing bridges with an extensive literature review~\cite{layer-2-bridges}.
Gudgeon et al. provided an SoK across the various L2 systems such as payment and state channels, side chains and the bisection method for succinct verification of state updates~\cite{layer-2-sok}.

Sovereign rollups and lazy blockchains on which they are built were first proposed by Mustafa Al-Bassam~\cite{albassam2019lazyledger}.
Lazy blockchains order and serve rollup data without executing, or otherwise interpreting it.
Sovereign rollups and their comparison to the enshrined varieties were further investigated in blog posts~\cite{sovereign-rollup-blog,rollup-forum}.

Accountable safety was first defined in the blockchain setting by Buterin and Griffith as a property of Casper FFG~\cite{casper-ffg}, the finality gadget of PoS Ethereum~\cite{gasper}.
Sheng et al. characterized accountable safety as a forensic method for identifying malicious nodes after a safety violation, and proved that BFT protocols such as PBFT~\cite{pbft} and HotStuff~\cite{yin2018hotstuff} support this method.
Building on this idea, Neu et al. proposed the first accountably safe finality gadget design~\cite{aa-dilemma}.

\section{Model}
\label{sec:preliminaries}

A rollup architecture consists of a \emph{parent chain}, maintained by the \emph{parent chain nodes}, and rollup nodes, which can be classified into \emph{rollup full nodes} and \emph{rollup light clients}.
The parent chain is responsible for ordering and availability of the rollup transactions, whereas the transactions are execution and the state transitions are validated by the rollup nodes.


\paragraph{Notation.} For a positive integer $n$, we denote $\{1, \cdots, n\}$ by $[n]$.
We let $\lambda$ be the security paramemter and say that an event happens with \emph{negligible probability}, if its probability is $o(1/\lambda^d)$ for all $d > 0$. 
An event happens with overwhelming probability if it happens except with probability negligible in $\lambda$.
A sequence $A$ conflicts with sequence $B$ if neither $A$ is a prefix of $B$, nor $B$ is a prefix of $A$.

\subsection{Nodes}
\label{sec:nodes-overview}
\vspace{-0.3cm}
\noindent
\paragraph{Parent chain nodes.}
These nodes receive transactions as input, and execute a state machine replication (SMR or blockchain) protocol such as PoS Ethereum~\cite{gasper}.
Their goal is to ensure that the observers of the parent chain obtain the same sequence of totally-ordered, finalized transactions called the \emph{ledger}.
These observers typically query the parent chain nodes to obtain the correct ledger.

Although the parent chain might have its own execution logic, we assume that its nodes do not execute the \emph{rollup transactions}, \ie, the chain is lazy toward the transactions.
However, they ensure that the rollup transactions are ordered within the ledgers obtained by the observers.
Although this ledger may contain transactions for multiple rollups, the total order imposed on the transactions enables nodes of each rollup to retrieve a subsequence of their own transactions from the ledger.
In the context of a single rollup, we will hereafter call this subsequence the \emph{rollup ledger}.
We will denote by $\LOG^\client_t$ the rollup ledger in the view of a rollup full node $\client$ at time $t$.

\noindent
\paragraph{Rollup full nodes.}
These nodes (full nodes for short) execute the rollup transactions to obtain and store the rollup \emph{state}.
This state is uniquely determined by the rollup ledger.
An empty ledger corresponds to a fixed \emph{genesis state}, $\genesisstate$.
The state of a non-empty ledger can be found by applying rollup transactions iteratively on top of the previous state, starting at the genesis state.
This process is captured by a transition function $\transition$. 
For instance, the state associated with the sequence $\tx_1 \cdots \tx_n$ is $\transition(\cdots \transition(\genesisstate, \tx_1), \cdots, \tx_n)$.
We use the shorthand notation $\transition^*$ to apply a sequence $\tx_1 \cdots \tx_n$ to a state: $\transition^*(\genesisstate, \overline{\tx}) = \transition(\cdots \transition(\genesisstate, \tx_1), \cdots, \tx_n)$.
Some rollup transactions might be \emph{invalid} with respect to the current state.
By convention, if transaction $\tx$ is invalid with respect to $\st$, we let $\transition(\st,\tx) = \st$.
We will denote the current rollup state in the view of a rollup full node $\client$ at time $t$ by $\st^\client_t := \transition^*(\genesisstate, \LOG^\client_t)$.

Rollup full nodes act as light clients of the parent chain as they do not want to download transactions from the parent chain that do not concern their rollup.
They download and verify the block headers on the finalized portion of the parent chain (\eg, the prefix of the longest chain in PoW Ethereum), transaction of their rollup, but not all of the transactions within the blocks.
Then, using Merkle proofs, the full nodes can verify the inclusion of rollup transactions within the parent chain, without downloading any non-rollup transactions.

Each rollup state $\st$ can be uniquely represented by a succinct and binding commitments called the \emph{state root} and denoted by $\stc$.
For instance, a state consisting of account balances can be stored as a sparse Merkle tree of key-value pairs, and represented by the Merkle root of the tree.
We denote the current state root in the view of a rollup full node or light client $\client$ at time $t$ by $\stc^\client_t$.

\noindent
\paragraph{Rollup light clients}
Goal of a rollup light client (light client for short) is to learn a particular state element, \eg, its own account balance, within the latest rollup state.
Examples of light clients include application users and the nodes on other rollups with a rollup bridge~\cite{layer-2-bridges}.

Rollup light clients are also light clients of the parent chain.
They download and verify the finality of the headers on the parent chain, thus can verify the inclusion of any succinct data by the parent chain via Merkle proofs.
However, they do not download the block contents and the rollup transactions.
A light client is called \emph{succinct} if the data downloaded and executed by the client to obtain the latest state root is at most poly-logarithmic in the number of rollup transactions and the state size (excluding the header chain).
Thus, a succinct light client cannot obtain the current state root by downloading or executing the rollup transactions.
It instead retrieves the root from the full nodes, and validates it using different techniques (\cf Section~\ref{sec:rollup-types}).

\subsection{Environment, Adversary and the Network}
\label{sec:environment-and-adversary}

Time is slotted and and the clocks of the nodes are synchronized\footnote{Bounded clock offsets can be captured by the network delay.}.
Transactions are input to the nodes by the environment $\mathcal{Z}$.
We assume a permissionless setting for the rollup nodes, where $\mathcal{Z}$ can spawn new nodes over time.
Upon joining the protocol at some slot $t$, the new node receives every transaction and message sent prior to slot $t$.
The adversary $\mathcal{A}$ is a probabilistic poly-time algorithm.
It can corrupt rollup full nodes and parent chain nodes, then called \emph{adversarial}, whereas the remaining nodes are called \emph{honest}.
(All rollup light clients are assumed to be honest as they are passive observers; an adversarial client can only hurt itself.)
Adversarial nodes can deviate from the protocol arbitrarily (Byzantine faults) under $\mathcal{A}$'s control, which can access their internal states.
The adversary also controls the delivery of messages sent by the honest nodes, and observe all messages sent at a given slot before sending its message at that slot.
However, the adversary is required to deliver all messages sent by an honest node to \emph{all} recipients within $\Delta$ slots, where $\Delta$ is known (synchronous network).

\subsection{Data Availability}
\label{sec:data-availability}

We say that a parent chain block is unavailable if an observer sees its header before slot $t-\Delta$, yet does not see the block data by slot $r$.
Otherwise, it is said to be available.
Parent chain nodes download their blocks to ascertain their availability.
However, downloading the whole block is infeasible for succinct light clients, and undesirable for full nodes that only want the rollup data.
They can instead use \emph{data availability sampling (DAS)}, to efficiently verify their availability.

To facilitate DAS, the parent chain nodes divide their blocks into small chunks. 
They then encode these chunks with forward error-correcting codes (\eg Reed-Solomon codes~\cite{albassam2019fraud}, LDPC codes~\cite{codedMerkleTrees}), and commit to them by polynomial commitments (\eg KZG commitments~\cite{kate,dankrad-kate}) or Merkle trees.
During DAS, a rollup node queries the parent chain nodes to reveal a constant number of the committed chunks.
If it observes and verifies all of the requested chunks against the commitment, then the rollup node deems that the data is available.
If there are sufficiently many queries for distinct chunks, DAS guarantees the recovery of the transaction data.
Moreover, assuming an enhanced network model (\eg, a mixnet~\cite{mixnet}) with sufficiently many DAS queries that cannot be linked to the senders, DAS ensures that light clients will receive back chunks and conclude the availability of data if and only if it is recoverable from the revealed chunks.

\begin{definition}[Security for DAS~\cite{albassam2019fraud}]
A DAS scheme is said to be secure if it satisfies:
\begin{itemize}
\item \textbf{Soundness.} If an honest node (of any kind) accepts a parent chain block as available at time $t$, then at least one honest parent chain node has the full block data by time $t+\Delta$.  
\item \textbf{Agreement} If an honest node accepts a parent chain block as available at time $t$, then all other honest nodes accept it as available by time $t+\Delta$.
\end{itemize}
\end{definition}

\subsection{Security}
\label{sec:security-definition}

At any time slot $t$, a full node $\client$ holds a rollup ledger $\LOG^{\client}_t$, finalized by $\client$ at slot $t$, and the corresponding state $\st^\client_t$, whereas each light client $\client$ holds a state root $\stc^\client_t$, accepted by $\client$ at slot $t$.
Each parent chain node $\client$ online at slot $t$ also holds a parent chain ledger constructed using the \emph{finalized} and \emph{available} parent chain blocks in $\client$'s view at slot $t$. 

We provide two security definitions for rollup full nodes and light client respectively.
Upon obtaining the same ledger, the full nodes can iteratively process the transactions, and maintain the same rollup state over time.
Hence, security for the rollup full nodes is the same SMR safety and liveness definition for ledgers  given in~\cite{sleepy}, but applied to the rollup ledger instead of the full parent chain ledger:

\begin{definition}[Ledger Security~\cite{sleepy}]
\label{def:log-sec}
A rollup or SMR protocol is said to be \emph{secure} with latency $\Tfull$ for the full nodes if its ledgers $\LOG^{\client}_t$ satisfy;
\begin{itemize}
    \item \textbf{Safety.} For any slots $t$, $s$ and honest full nodes $i$, $j$ (online at these slots), either $\LOG^i_{t}$ is a prefix of $\LOG^j_{s}$ or vice versa.
    For any honest full node $i$ and times $s$, $t \geq s$, $\LOG^i_{s}$ is a prefix of $\LOG^i_{t}$.
    \item \textbf{Liveness.} If a transaction $\tx$ is input to an honest node at time $t$, then $\tx \in \LOG^j_{s}$ for all slots $s \geq t + \Tfull$ and all honest nodes $j$.
\end{itemize}
\end{definition}

In contrast, the state security below applies to both full nodes and light clients that hold a state root.
\begin{definition}[State Security]
\label{def:state-security}
A rollup is said to be \emph{secure} with latency $\Tlight$ for full nodes and light clients if the state roots $\stc^{\client}_t$ satisfy; 
\begin{itemize}
    \item \textbf{Safety.} For any slots $t$, $s$ and light clients or honest full nodes $i$, $j$ (awake at these slots), $\stc^i_t$ and $\stc^j_s$ satisfy the following guarantees:
    There exist rollup ledgers $\LOG_1$ and $\LOG_2$ such that $\stc^i_t = \left<\transition^*(\genesisstate,\LOG_1)\right>$, $\stc^j_s = \left<\transition^*(\genesisstate,\LOG_2)\right>$, and either $\LOG_1$ is a prefix of $\LOG_2$ or vice versa. If $i=j$ and $t \geq s$, $\LOG_2$ is a prefix of $\LOG_1$ (or vice versa).
    \item \textbf{Liveness.} If a transaction $\tx$ is input to an honest node at time $t$, then for any slot $s \geq t + \Tlight$ and light client or honest full node $i$ (awake at that slot), there exists a ledger $\LOG$ such that $\stc^i_s = \left<\transition^*(\genesisstate,\LOG)\right>$ and $\tx \in \LOG$.
\end{itemize}
\end{definition}

Via the binding property of the state roots (commitments), we assume that except with negligible probability in $\lambda$, no PPT adversary $\mathcal{A}$ can find two ledgers $\LOG$ and $\LOG'$ such that $\LOG \neq \LOG'$ and $\left<\transition^*(\genesisstate,\LOG)\right> = \left<\transition^*(\genesisstate,\LOG')\right>$.

\subsection{Accountability}
\label{sec:accountability}

Most rollups are built on top of PoS consensus protocols, where security is enforced by imposing financial punishments (\eg, slashing) on the protocol violators.
This notion is formalized by accountable safety: 
\begin{definition}[Accountable Safety for a SMR Protocol~\cite{aa-dilemma}]
\label{def:accountable-safety}
A SMR protocol is said to provide accountable safety with resilience $f$ if when there is a safety violation (per Definitions~\ref{def:log-sec}, ledger security), (i) at least $f$ adversarial nodes are irrefutably identified by a cryptographic proof as protocol violators, and (ii) no honest node is identified, with overwhelming probability.
Such a protocol is said to provide \emph{$f$-accountable-safety}.
\end{definition}
Accountable safety concerns with the \emph{aftermath} of a successful safety attack, and akin to a \emph{forensics} analysis, requires full nodes to construct a \emph{cryptographic proof} that will convince an external observer that a certain subset of the validators deviated from the protocol, thus are adversarial~\cite{forensics}.

We extend the notion of accountable safety to rollups as follows:
\begin{definition}[Accountable Safety for a Rollup]
\label{def:rollup-accountable-safety}
A rollup protocol is said to provide accountable safety with resilience $f$ if when there is a safety violation (per Definitions~\ref{def:log-sec} or~\ref{def:state-security}, ledger or state security), (i) at least $f$ adversarial parent chain nodes are irrefutably identified by a cryptographic proof as protocol violators, and (ii) no honest parent chain node is identified, with overwhelming probability.
Such a rollup protocol is said to provide \emph{$f$-accountable-safety}.
\end{definition}

\section{Rollup System}
\label{sec:rollup-system}

A rollup system consists of a parent chain (\eg Ethereum's beacon chain), rollup full nodes and light clients (\cf Figure~\ref{fig:merged-consensus}).
It typically has a special full node called the \emph{sequencer} that collects transactions from the users, and posts them to the parent chain~\cite{starknet,zksync,optimism,fuel,arbitrum}.
To reduce Ethereum's $\mathsf{calldata}$ cost, transactions are typically compressed and posted in batches.
The parent chain makes the rollup data public, thus enabling any full node to reconstruct the latest state without relying on the sequencer.
Similarly, censored rollup users can bypass the sequencer by posting their transactions directly to the parent chain.

Full nodes execute the rollup transactions in batches called the \emph{rollup blocks}.
Each block contains a transaction root that commits to the transactions within the block, and a state root that commits to the state corresponding to the transactions \emph{both} in the block and its prefix.
Rollups borrow security from the parent chain via \emph{merged consensus}~\cite{adler2020building,ethresearch-merged-consensus} (Section~\ref{sec:merged-consensus}), where full nodes use the transaction sequence on the parent chain to output a total order across the rollup blocks.
However, unlike full nodes, light clients do not download or process these transactions, instead asking the full nodes for the latest valid state.
Section~\ref{sec:rollup-bridges} explains the differences between enshrined and sovereign rollups in terms of how light clients learn about the state.
Section~\ref{sec:rollup-types} investigates optimistic and ZK based methods for proving state validity in different rollup types.


\subsection{Merged Consensus}
\label{sec:merged-consensus}

\begin{figure}[h]
    \centering
    \includegraphics[width=\linewidth]{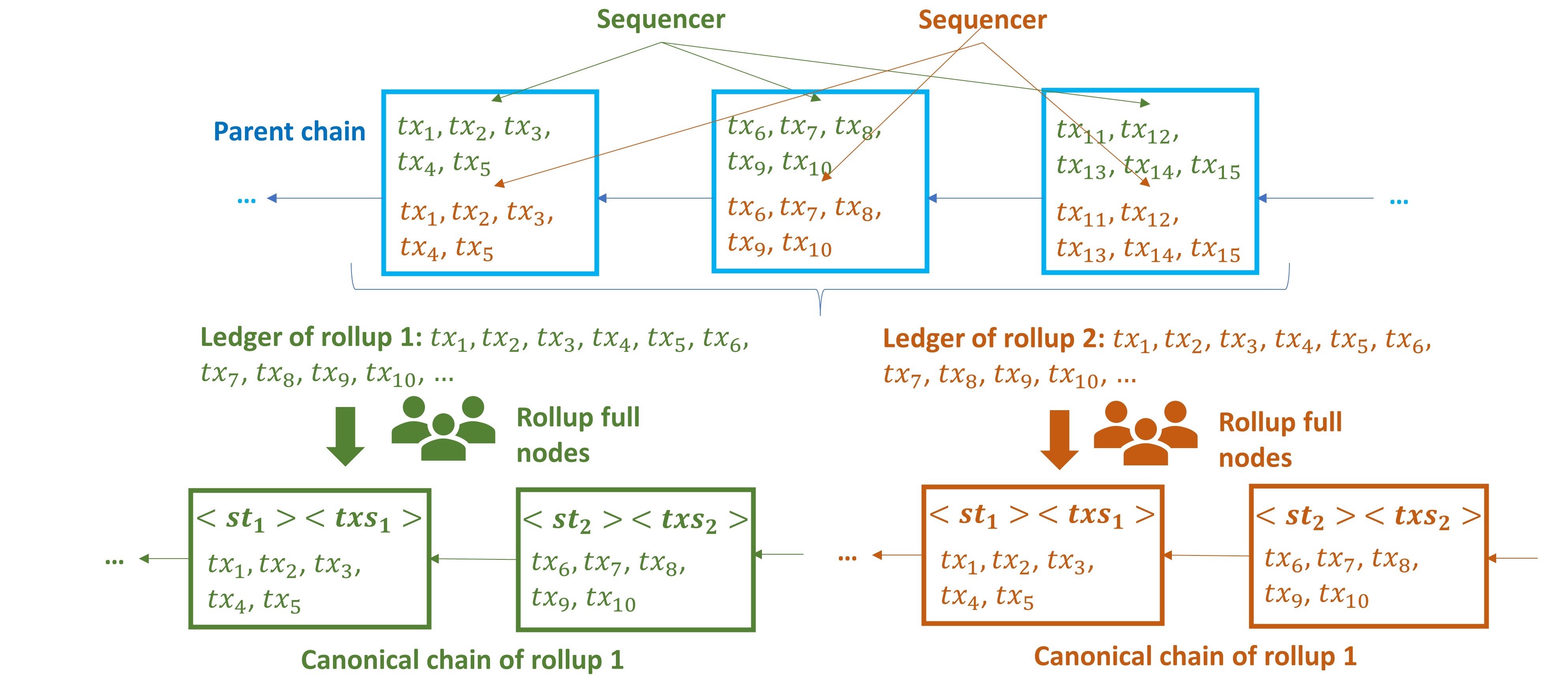}
    \vspace{-0.5cm}
    \caption{The figure depicts two rollups, colored green and brown, that use the same parent chain. Full nodes of each rollup download the transactions of their rollup (green and brown respectively), group them into rollup blocks, and execute them, obtaining the state and transaction roots.}
    \label{fig:merged-consensus}
\end{figure}

Merged consensus can be described by a validity function, fork-choice rule, finalization rule and a sequencer selection algorithm.
Since the sequencer selection algorithm is not crucial for security, but is an optimization for better liveness, we delegate its discussion to Appendix~\ref{sec:appendix-latency}.

\subsubsection{Validity.} A rollup block $b$ containing the transaction sequence $\txs$ and its state root are said to be \emph{valid} with respect to its prefix if there exists a state $\st$ such that $b$'s state root equals $\left<\transition^*(\st,\txs)\right>$, and the state root of the last rollup block in $b$'s prefix equals $\left<\st\right>$.
A chain of rollup blocks is valid if its blocks are valid.

\subsubsection{Fork-choice and Finalization Rules.}
To construct the \emph{canonical rollup chain}, an honest full node first downloads the rollup ledger, \ie, the sequence of rollup transactions on the parent chain (\cf Optimism and Arbitrum~\cite{optimism,arbitrum}).
It then executes these transactions in the specified order, and organizes them into a chain of rollup blocks, each with the corresponding state and transaction roots (\cf Figure~\ref{fig:merged-consensus}).
The full nodes are assumed to obtain the same sequence of transactions upon observing the same parent chain block (\eg, the first $1000$ bits of the data is reserved for the rollup).
Similarly, given the same transaction sequence, they are assumed to output the same rollup blocks. 
For example, in Optimism, the sequencer posts rollup transactions in batches, which can be directly translated into blocks.
A rollup block $b$ is said to be \emph{finalized} in the view of an honest full node $\client$ at slot $t$, if $b$ first appears as part of the canonical rollup chain in $\client$'s view at slot $t$.
Blocks on honest full nodes' canonical chains are valid.

It is possible that the sequencer first creates and executes the rollup blocks prior to posting their state roots and rollup data (such as state `deltas') to the parent chain (\cf, zkSync~\cite{zksync}).
Then, if there are invalid rollup blocks posted to the parent chain, the full nodes ignore these blocks while extracting the rollup ledger from the parent chain.


\subsection{Bridge to the Parent Chain}
\label{sec:rollup-bridges}

\begin{figure}[h]
    \centering
    \includegraphics[width=\linewidth]{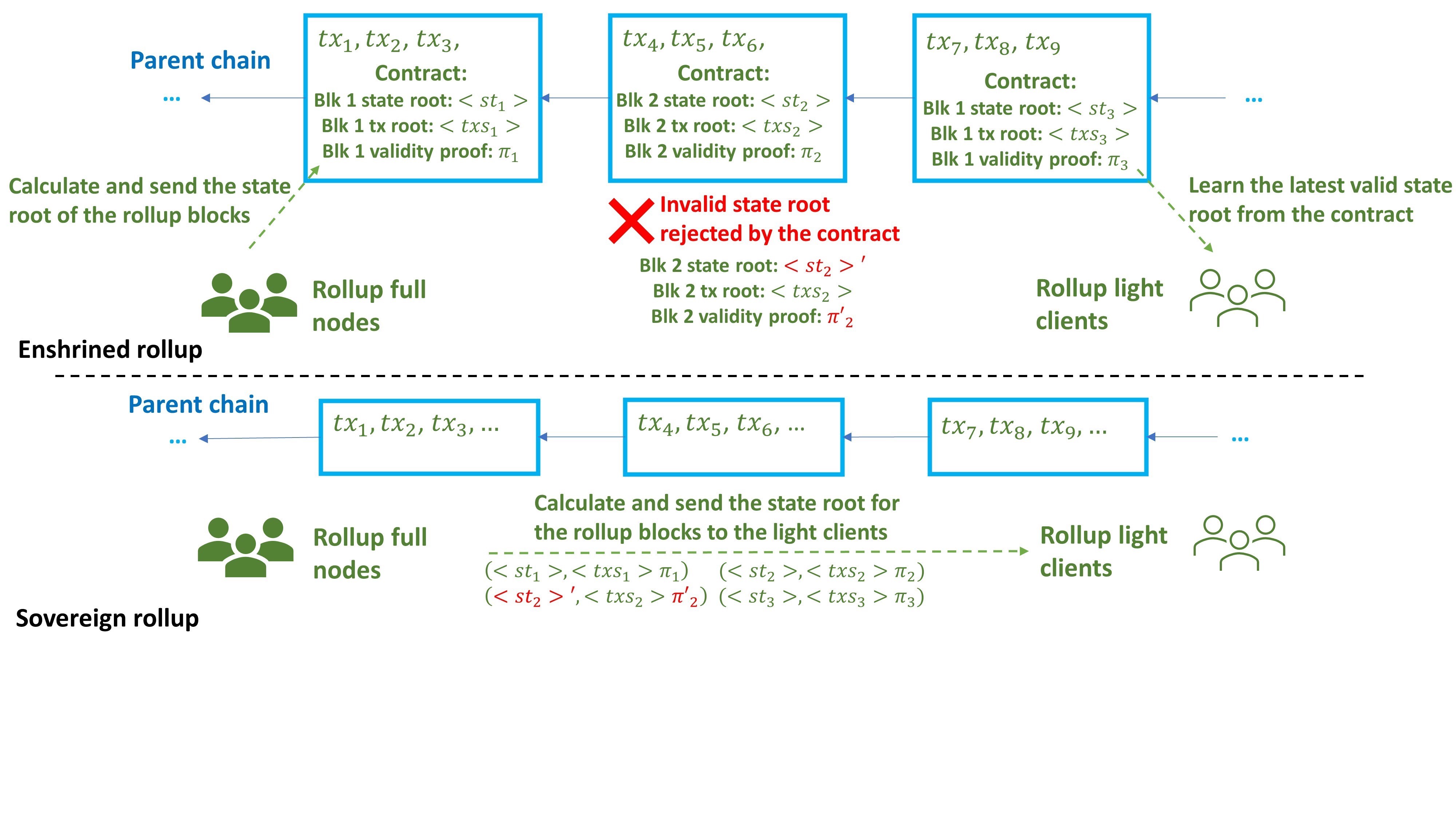}
    \vspace{-1.5cm}
    \caption{In an enshrined ZK rollup (top), the bridge contract checks the validity proofs $\pi_1, \pi_2, \ldots$ associated with the state root of the rollup blocks to detect invalid states such as $\left<\st_2\right>'$ (red). As this guarantees the validity of the root in the contract, light clients can directly obtain the latest root from the chain. In a sovereign ZK rollup (bottom), there is no contract on the parent chain, and the clients obtain the roots and proofs directly from the full nodes. An adversarial node can claim that $\left<\st_2\right>'$ is valid; however, upon observing the incorrect proof $\pi'_2$ (red), light clients invalidate it.}
    \label{fig:rollup-types}
\end{figure}

Depending on the parent chain's role in validating rollup blocks, rollups fall into two categories: \emph{enshrined} and \emph{sovereign}.

\subsubsection{Enshrined Rollups.} In an enshrined rollup, a \emph{bridge contract} executed by the parent chain nodes keeps track of the state root of the latest valid rollup block (but not the whole state).
The bridge contract continuously receives state and transaction roots of the new rollup blocks, and checks the validity of the state root with the help of the transaction root, and fraud (fault) or validity (ZK) proofs (\cf Figure~\ref{fig:rollup-types}).
If the check fails for the most recent root, the contract rejects the root, labelling the corresponding rollup block (if there is any) as invalid.
If the check succeeds, the contract accepts the roots as representing the most recent state and rollup block.
As the bridge contract validates the state root determined by the canonical rollup chain, light clients obtain the most recent valid root directly from the contract.
Thus, at any slot, clients can \emph{finalize} the latest valid root recorded by the contract.

Bridge contract allows rollup users to transfer money from the parent chain to the rollup by locking funds within the contract.
For withdrawals, users typically submit a burn transaction to the rollup.
Locked funds are released on the parent chain once the funds are burned in the rollup side, and a state root capturing this operation is accepted by the contract.

\subsubsection{Sovereign Rollups.} In a sovereign rollup, there is no bridge contract (\cf Figure~\ref{fig:rollup-types}).
The parent chain stays agnostic towards the rollup state, and is used only to ensure the ordering and availability of the rollup transactions as in Section~\ref{sec:merged-consensus}.
Thus, light clients receive state and transaction roots of the rollup blocks directly from the full nodes, and rely on the honest nodes to identify the latest valid state root.
Full nodes generate fraud proofs to warn the light clients about invalid roots, or validity proofs to convince them of a root's validity.
These proofs can be sent to the parent chain, or over the network to the clients.

Sovereign rollups handle deposits and withdrawals through cross-chain bridges as each sovereign rollup behave like an independent L1 blockchains towards other chains (\cf~\cite{layer-2-bridges} for an in-depth discussion of bridges).

\subsection{Proving Validity of the Rollup State}
\label{sec:rollup-types}

\subsubsection{Enshrined Optimistic Rollups.}
In these rollups, instead of verify the new state roots directly, the contract trusts honest full nodes called \emph{watch towers} to warn if a given root is invalid.

\paragraph{Non-interactive fraud proofs.}
To support fraud proofs, the full nodes calculate \emph{intermediate state roots (ISRs)} after every few transactions while executing the rollup transactions.
They commit to these ISRs as part of the rollup block, and post these roots to the parent chain with the final state and transaction root of the block~\cite{albassam2019fraud}.
If the contract receives an invalid state root for a claimed rollup block, a watch tower sends a fraud proof to the contract to point out the first invalid ISR within the block.
The proof consists of pointers to the last valid ISR, the first invalid one, the transaction in between, and the Merkle proofs for the state elements touched by the transaction with respect to the last valid ISR.
(The proof also contains Merkle proofs for the ISRs and the transaction with respect to the transaction root.)
The contract then verifies the Merkle proofs and calculates the next ISR by applying the transaction to the touched state elements.
It finally verifies the fraud proof by comparing the calculated root with the existing invalid ISR on the chain, their difference implying invalidity.

\paragraph{Refereed Games.}
In an enshrined rollup using refereed games~\cite{refereed} (\eg, Optimism~\cite{optimism}, Arbitrum~\cite{arbitrum}), a watch tower, \ie the challenger, demonstrates the invalidity of the state root of a disputed block by playing a \emph{bisection} game against another full node, \ie, the responder, that defends the root.
The game is mediated by the bridge contract that monitors the communication between the two parties.
At each step of the game, the responder divides the transaction sequence within the disputed block into $K$ (\eg, $2$) \emph{pieces} and calculates the ISRs at their ends.
Then, the challenger claims one of these roots to be invalid, after which the game is repeated on the chosen piece with the roles reversed for the players.
The bisection continues until both parties converge on a single simple instruction whose execution is the source of the disagreement.
Finally, the bridge contract succinctly checks whether processing this instruction outputs the ISR claimed by any party using the same method as the fraud proofs.  
The truthful party wins the game, its claim accepted by the verifier.

\paragraph{Succinctness.} To avoid spamming attacks, where adversarial full nodes send many invalid state roots for the same rollup block, optimistic rollups impose financial punishments for posting invalid roots, and challenging valid ones with fraud proofs or bisection games.
The full nodes can be required to put up stake in the contract before sending roots or fraud proofs, at the peril of losing it if they misbehave.
Assuming the succinctness of verifying fraud proofs, if the number of invalid roots claimed per rollup block is bounded, the contract's communication and computational complexity per block remains succinct in the block size.

\paragraph{Delay of Finality.} In an optimistic enshrined rollup, a state root received by the contract remains in a pending state until sufficient time passes (which we denote by $\Tfraud$), during which no fraud proof or bisection game invalidates the block. 
Otherwise, they are deemed invalid.
This is to give watch towers enough time to challenge invalid roots, which is further discussed in Appendix~\ref{sec:appendix-latency}.

\subsubsection{Sovereign Optimistic Rollups.}

\paragraph{Non-interactive fraud proofs.}
In a sovereign rollup using non-interactive fraud proofs, the full nodes would again post the ISRs to the parent chain to ensure that all light clients observe and agree on them.
If a state root obtained by a client is invalid, a watch tower can send a fraud proof to the client over the network.
As there is no contract to verify the proof, the light clients themselves verify the proofs using the ISRs and transactions on the chain.
Although this can be done succinctly for one fraud proof, a light client observing the protocol at some slot $r$ might have to download and verify as many fraud proofs as the number of rollup blocks posted until slot $r$; since there is no contract that can be trusted to have verified past proofs and found the previous valid state (\cf~\cite{woods-attack} and~\cite[Appendix C]{lazylight}).
Since such a light client has complexity $\Theta(r)$, it is not possible to support succinct clients on sovereign rollups using fraud proofs the same way they are used on enshrined rollups. 

\paragraph{Refereed Games.}
Tas et al.~\cite{lazylight} proposed the first succinct light client protocol for a sovereign rollup based on refereed games.
Their light client, \ie, the verifier, connects to multiple rollup full nodes, \ie, the provers (one of which is assumed honest), to get the latest valid state root.
The honest provers update and maintain a Merkle mountain range~\cite{mmr,mmr-grin} over the sequence of rollup transactions received by the parent chain, and the corresponding ISRs.   
If the provers disagree about the latest root, the verifier mediates a tournament of bisection games among them to discover the truthful parties with the same valid root.
In each game, the verifier walks down the Merkle trees of the disagreeing parties and identifies the first point of disagreement between the ISR sequences of the two parties.
Finally, the verifier finds the party with the correct ISR by processing a single transaction and using the same method as the fraud proofs.

Since the games are played across all rollup transactions on the parent chain at each slot $r$, the verifier overcomes the problem of validating historic invalid states, which ensures the succinctness of the light clients.
However, these games might be more susceptible to eclipse attacks since the security of the construction requires one honest prover connected to \emph{each} light client, rather than a single watch tower monitoring the parent chain.

\paragraph{Succinctness.} To discourage spamming attacks, sovereign rollups can again adopt slashing rules for the full nodes sending invalid state roots (signed by the nodes) to the light clients as part of its transition function, and burn the native tokens of malicious nodes within the rollup state.

\subsubsection{Zero-Knowledge (ZK) Rollups}

In a ZK rollup, state and transaction roots of rollup blocks are posted to the contract along with succinctly verifiable, short validity (ZK) proofs generated by the full nodes using verifiable computation techniques such as Groth~\cite{groth}, Plonk~\cite{plonk} and STARKs~\cite{stark}.

\paragraph{Enshrined ZK Rollups.}
The bridge contract accepts a claimed state root for a new rollup block only after verifying its validity proof.
During verification, the contract uses as public inputs, the verified state root of the parent block, the claimed state and transaction roots for the new block, and select data from the new rollup block (assumed to be on the chain).
State roots without a correct validity proof are immediately rejected (\cf Figure~\ref{fig:rollup-types}).
Assuming that the number of invalid roots submitted to the contract per rollup block is bounded (no spamming) and the proof size and its verification time are succinct, the complexity of the contract per block remains succinct in the block size.

\paragraph{Sovereign ZK Rollups.}
In a sovereign rollup, a bootstrapping light client again faces the issue of verifying historic invalid state roots generated since the beginning of the protocol.
Although the client has to verify only a succinct validity proof to detect an invalid root, for a client online at slot $r$, there might be $\Theta(r)$ many roots, implying linear complexity in the protocol runtime.
(An enshrined rollup solves this problem using the contract that verifies old state and transaction roots.)
To overcome this problem, sovereign rollups can use recursive SNARKs~\cite{recursiveSNARK,sasson-recursive-snark} such as Mina~\cite{coda} or Halo~\cite{halo,bunz-recursive-proof} with succinctly verifiable proofs.
Another alternative is to use Plumo~\cite{plumo} that can prove four months of blockchain state history with a single transition proof, in practice significantly reducing the complexity of verifying old proofs.

With recursive constructions, the light client has to verify only a single proof generated by the full nodes to validate the \emph{whole} blockchain history summarized by the latest state root.
Thus, without spamming, a sovereign ZK rollup can achieve succinctness via recursive SNARKs or STARKs.
To support recursive SNARKs, full nodes maintain a root (commitment) of the whole transaction history including those in the latest rollup block.
During verification the light client uses as public inputs, the new claimed state root, the new claimed root for the transaction history, and select data from among these transactions.

\section{Designing Rollups with Accountable Safety}
\label{sec:accountable-rollup}

\begin{figure}[h]
    \centering
    \includegraphics[width=1.1\linewidth]{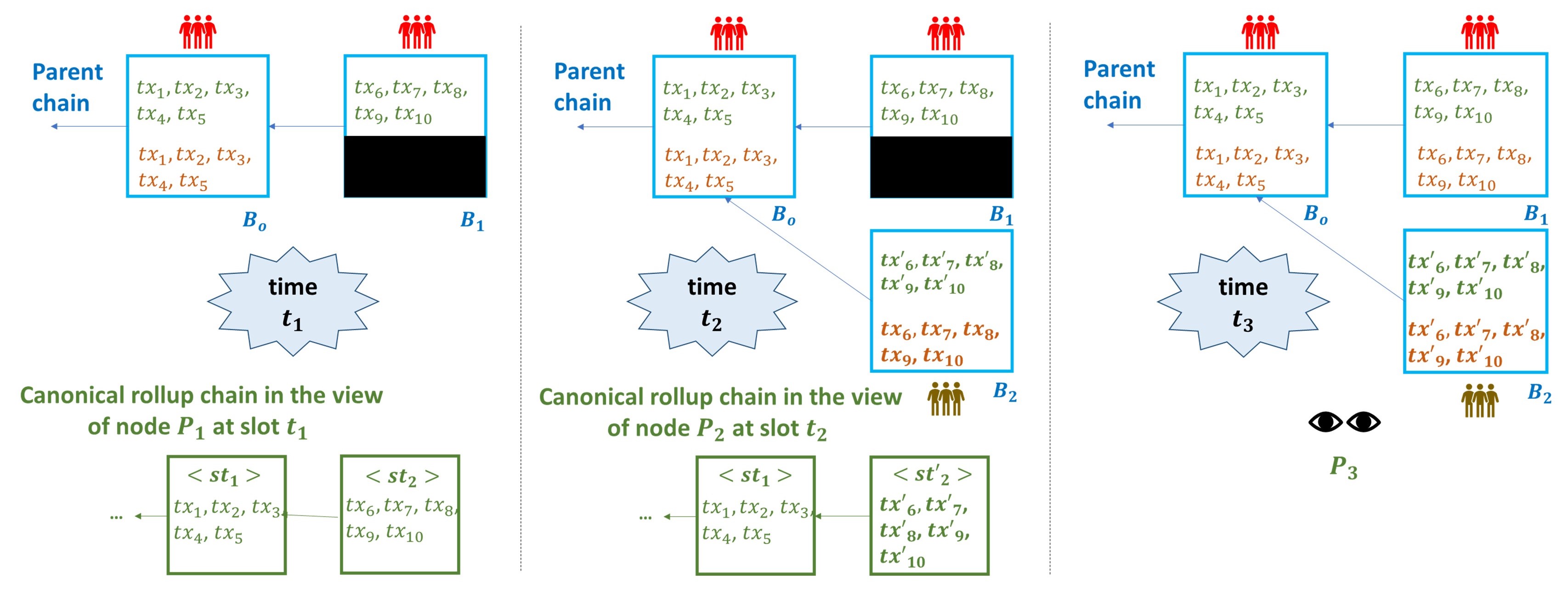}
    \caption{Parts of the data within block $B_1$, denoted black, are withheld from the rest of the network by the adversarial parent chain nodes, shown in red, that have finalized $B_1$. At slot $t_1$, node $P_1$ outputs a chain containing the rollup transactions within $B_1$. At slot $t_2$, $P_2$'s rollup chain does not contain the transactions within $B_1$ since the social consensus of the parent chain has restarted it from block $B_2$. At slot $t_3$, the unavailable data within $B_1$ is released, making it impossible for a late-joining observer $P_3$ to irrefutably accuse any parent chain node of protocol violation. The attack works irrespective of whether the rollup is an enshrined or sovereign rollup.}
    \label{fig:accountability-attack}
\end{figure}

\subsection{A Simple Attack on Accountability.}

The rollup constructions described so far do not provide accountable safety, even if they are built on an accountably-safe parent chain.
We show this with a safety attack, after which no adversarial parent chain node can be provably accused.

Consider a ZK rollup built on PoS Ethereum, and suppose the adversary temporarily controls over $2/3$ of the parent chain nodes, called the validators (colored red on Figure~\ref{fig:accountability-attack}).
With their votes, they propose and finalize a valid parent chain block, $B_1$, at slot $t_1$, yet withhold parts of its data from the rest of the network.
However, they make all the transactions of the green rollup available, enabling the full nodes to download them and advance the rollup state.
At this point, rollup nodes are not aware of the \emph{data availability attack}, and cannot detect that an attack is happening; as they do not download non-rollup transactions (\cf Section~\ref{sec:nodes-overview}).
Hence, the rollup chain of a full node $P_1$ at slot $t_1$ contains the rollup transactions within the unavailable block (\cf Figure~\ref{fig:accountability-attack}, left).
Note that it is crucial for the adversarial validators to hide only the non-rollup transactions in $B_1$.
For instance, if the first $100$ bits of $B_1$ are reserved for the rollup transactions, adversary should \emph{not} withhold any of these first $100$ bits.
Otherwise, the rollup nodes would become aware of the attack, and $P_1$ could refuse to output a rollup chain containing even the available rollup transactions from $B_1$, thus thwarting the attack.

Since the unavailability of the withheld transactions implies censorship, the honest validators subsequently engage in extra-protocol consensus \emph{on the social layer} to create a `minority soft fork', slashing the stake of the adversarial validators, and inaugurating a new, honest validator set (golden on Figure~\ref{fig:accountability-attack})~\cite{ethresearch-51-attack}.
Thus, Pos Ethereum is restarted with fresh validators from the last \emph{available} block, $B_0$, after which the validators and rollup nodes download only the blocks built by the new validators.
At slot $t_2>t_1$, the validators finalize a block $B_2$ that essentially conflicts with $B_1$.
Hence, the rollup chain of a node $P_2$ at slot $t_2$ includes transactions from the new block, and conflicts with the rollup chain of $P_1$ at slot $t_1$, implying a safety violation per Definition~\ref{def:log-sec} (\cf Figure~\ref{fig:accountability-attack}, center).

At slot $t_3$, the adversarial validators publicize the hidden data within the unavailable block (\cf Figure~\ref{fig:accountability-attack}, right).
Suppose an external observer $P_3$ joins the protocol at slot $t_3$.
At this point, not only the transactions within both forks of PoS Ethereum are available and valid in $P_3$'s view, but the two forks have different validator sets, making it impossible for $P_3$ to irrefutably accuse any validator of protocol violation such as double-voting for the conflicting blocks $B_1$ and $B_2$.
Hence, no validator, even those who witnessed the data availability attack can generate a cryptographic proof that will convince $P_3$, who has not observed the attack, that the red validators are adversarial.
This implies the lack of accountable safety.

\subsection{An Accountably-Safe Rollup.}

The attack above highlights the main problem preventing accountability: the rollup nodes do not verify the \emph{availability} of non-rollup transactions within the parent chain blocks.
To mitigate this problem, they must use data availability sampling (DAS), through which they can validate the availability of the blocks without downloading them.
For instance, if $P_1$ used DAS to verify $B_1$'s availability, it would have noticed the data availability attack and refused to download the transactions in $B_1$, avoiding any future conflict with $P_2$.
For illustration, we describe an accountably-safe sovereign ZK rollup below, based on an accountably-safe parent chain.

Both rollup full nodes and light clients validate the finality of the parent chain blocks by downloading their headers.
They also do DAS on these blocks to verify their availability (Figure~\ref{fig:rollup-diagram}).
Then, full nodes download rollup transactions from the available and finalized blocks, and construct the rollup chain as described in Section~\ref{sec:merged-consensus}.
They also update and maintain a root of the transaction history, the current state corresponding to these transactions and the latest validity proof found via the recursive SNARK.
The light clients query the full nodes for the state root of the last rollup block, upon which the honest full nodes reply with the queried root, the new claimed root of the transaction history and the validity proof.
The proof might require some select data from the transaction history, assumed to be recorded on the parent chain (\cf Section~\ref{sec:rollup-types}).
Upon verifying the proof, the light client accepts the given state root as the most recent valid root.


\section{Security}
\label{sec:security}

To prove security, we first give definitions for the completeness and soundness of the fraud proofs, refereed games and the proof systems.
A fraud proof scheme is \emph{complete} if a proof generated by the honest watch tower, accusing an invalid state root for the latest rollup block, always convinces the contract or the light clients of the root's invalidity.
Similarly, a refereed game is complete if the honest watch tower accusing an invalid state root is always found truthful by the verifying contract or client.
A fraud proof scheme is \emph{sound} if no PPT adversary $\mathcal{A}$ can generate a verified fraud proof accusing a valid state root except with negligible probability in the security parameter $\lambda$.
Similarly, a refereed game is sound if no PPT adversary $\mathcal{A}$ accusing a valid state root is found truthful by the verifying contract or the light clients except with negligible probability.
These properties are typically proven using the security of the underlying primitives such as binding vector commitments.
A proof system is \emph{complete} if the verifier always accepts the proof for a valid state root.
It is \emph{sound} if no PPT adversary $\mathcal{A}$ can generate a proof that is accepted by an honest verifier except with negligible probability.

We next state the \emph{respective} assumptions for the different rollup types, which will be used in the subsequent proofs.
\begin{itemize}
\item \textbf{Enshrined Optimistic Rollup:} The network is $\Delta$ synchronous, there is always an honest watch tower, the parent chain is live with some latency $T$, and the fraud proofs or the refereed games satisfy security (\ie, completeness and soundness).
\item \textbf{Sovereign Optimistic Rollup:} The network is $\Delta$ synchronous, there is an honest watch tower connected to every light client, and the refereed games satisfy security.
\item \textbf{Enshrined ZK Rollup:} The proof system is secure (complete and sound).
\item \textbf{Sovereign ZK Rollup:} The recursive proof system is secure.
\end{itemize}

\begin{theorem}[Accountable Safety for Rollups]
\label{thm:acc-safety}
Suppose the parent chain satisfies $f$-accountable safety and there exists a secure data availability sampling (DAS) scheme for the parent chain.
Then, each rollup type satisfies $f$-accountable safety for both full nodes and light clients under the respective assumptions.
\end{theorem}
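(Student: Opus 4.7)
The plan is to prove accountable safety by reduction: any rollup-level safety violation implies, except with negligible probability, a safety violation on the available and finalized parent chain ledger, whereupon the parent chain's own $f$-accountable safety furnishes the required cryptographic proof identifying $f$ adversarial parent chain nodes. The reduction must be carried out for each of the four rollup types, and must cover both the ledger-security condition (Definition~\ref{def:log-sec}) on full nodes and the state-security condition (Definition~\ref{def:state-security}) on light clients.

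First I would handle ledger security for full nodes, which is essentially type-agnostic. By the merged-consensus description of Section~\ref{sec:merged-consensus}, an honest full node $\client$ constructs $\LOG^\client_t$ as a deterministic function of its view of the finalized, DAS-verified parent chain ledger. By DAS soundness, any parent chain block that $\client$ accepts as available is in the hands of some honest parent chain node within $\Delta$ slots; by DAS agreement, all other honest rollup nodes accept it as available within $\Delta$ slots as well. Hence two conflicting ledgers $\LOG^i_t$ and $\LOG^j_s$ held by honest full nodes $i,j$ lift to conflicting parent chain ledgers in the views of $i$ and $j$, and parent chain $f$-accountable safety identifies the required $f$ adversarial parent chain nodes while accusing none of the honest ones.

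Next I would handle state security, case by case on the rollup type. In the enshrined ZK rollup, the state root a light client accepts is always the root recorded by the bridge contract; by soundness of the proof system, except with negligible probability this root is $\langle\transition^*(\genesisstate,\LOG)\rangle$ for the rollup ledger $\LOG$ extracted from the parent chain ledger in the contract's view, and by the binding property of $\left<\cdot\right>$ conflicting accepted roots entail conflicting parent chain ledgers. In the sovereign ZK rollup, the light client itself verifies the recursive SNARK, which by soundness binds the accepted root to the unique rollup ledger extracted from the parent chain blocks that the client has independently verified as available and finalized via DAS; conflicting roots again lift to conflicting parent chain ledgers. The two optimistic cases follow the same skeleton: completeness of the fraud-proof scheme (respectively refereed game) together with the honest watch tower and parent chain liveness with parameter $T$ guarantees that every invalid root is rejected before the $\Tfraud$ pending window elapses, while soundness guarantees that no valid root is rejected; thus, except with negligible probability, the root a client accepts still coincides with $\langle\transition^*(\genesisstate,\LOG)\rangle$ for the rollup ledger $\LOG$ on the available, finalized parent chain in the client's view, and the reduction proceeds as before.

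The main obstacle will be the sovereign setting, where the validity/fraud-proof machinery runs between the full nodes and each light client rather than inside a canonical on-chain contract. I must argue carefully that the client's DAS on the parent chain, together with soundness of the proof system (or refereed game) it runs locally, forces the accepted state root to correspond to the \emph{same} rollup ledger a hypothetical full node would extract from the parent chain ledger in the client's view; otherwise the reduction to parent chain accountable safety breaks. A secondary subtlety is bookkeeping the latency parameters: $\Tlight$ for each rollup type must properly absorb $\Delta$ (for DAS soundness and agreement), parent chain finalization latency, and $\Tfraud$ in the optimistic cases, so that liveness holds and the safety argument operates on finalized, available data. Everything else reduces to invoking the binding of $\left<\cdot\right>$ to rule out collision-based attacks and then quoting $f$-accountable safety of the parent chain verbatim.
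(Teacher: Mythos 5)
Your proposal is correct and follows essentially the same route as the paper: it reduces any rollup-level safety violation to a conflict between finalized, DAS-verified parent header chains (using DAS soundness and commitment binding), handles light clients via per-type state-validity arguments that mirror the paper's Lemmas~\ref{lem:validity-enshrined-optimistic}--\ref{lem:validity-sovereign-zk}, and then invokes the parent chain's $f$-accountable safety. The obstacles you flag (the sovereign setting and the role of \cite{lazylight} for sovereign optimistic rollups) are exactly where the paper delegates to external results, so no gap remains.
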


Proof of Theorem~\ref{thm:acc-safety} depends on the following lemmas:

\begin{lemma}[State Validity for Enshrined Optimistic Rollups]
\label{lem:validity-enshrined-optimistic}
Suppose the network is synchronous, there is an honest watch tower at all times, the parent chain satisfies liveness with latency $T$, and the fraud proofs or the refereed games are complete and sound. 
Then, if a claimed state root for a new rollup block is accepted by the bridge contract, with overwhelming probability, it is the valid state root corresponding to the sequence of transactions within the latest rollup block and its prefix.
\end{lemma}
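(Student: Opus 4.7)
The plan is to prove the lemma by induction on the position of the rollup block whose claimed state root the bridge contract accepts, using completeness of the fraud proofs / refereed games as the engine of the inductive step, parent chain liveness to deliver the honest challenge in time, and soundness together with binding of the state commitments to rule out the negligible-probability failure events. The overall structure is a contrapositive argument: if the accepted root were invalid, the honest watch tower would produce a winning challenge that the contract accepts within $\Tfraud$, contradicting acceptance.

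For the base case, the claimed root of the block immediately after genesis is valid with respect to $\genesisstate$ and the committed transaction sequence $\overline{\tx}$, i.e., equals $\left<\transition^*(\genesisstate,\overline{\tx})\right>$; otherwise, by completeness, the honest watch tower produces a fraud proof or bisection transcript that the contract verifies, and parent chain liveness with latency $T$ (choosing $\Tfraud \geq T$) delivers it within the fraud window, forcing rejection. For the inductive step, I would assume that every accepted root among the first $k$ blocks is valid with overwhelming probability, so in particular the honest watch tower and the contract agree on the state entering block $k{+}1$. The watch tower, being a full node with access to the rollup transactions posted to the parent chain, recomputes the correct root for block $k{+}1$ from this agreed state; any disagreement with the claim is again refuted in time $\Tfraud$ by completeness and liveness. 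Soundness rules out the adversary defeating the honest watch tower's challenge via a spurious bisection response, and the binding property of the commitments (assumed in Section~\ref{sec:security-definition}) ensures no distinct transaction sequence produces the same root except with negligible probability. A union bound over the finitely many rollup blocks involved yields the overall overwhelming probability statement.

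The main obstacle I expect is adversarial spamming within a single $\Tfraud$ window: a malicious party may post multiple distinct invalid roots for the same rollup block, or engage the honest watch tower in many parallel bisection games, potentially exceeding what the watch tower can refute in time. I would handle this by invoking the stake-and-slashing discussion in the `Succinctness' paragraph of Section~\ref{sec:rollup-types}, which imposes a bound on the number of invalid claims and challenges per block; combined with the assumption of an honest watch tower online throughout the window, this ensures each invalid claim is rejected before $\Tfraud$ elapses. A lesser subtlety is that the watch tower must hold the transaction data needed to build its Merkle / bisection proofs, which is immediate in the enshrined setting since the parent chain itself ensures availability of the rollup transactions posted to it.
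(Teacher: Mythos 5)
Your proof is correct and follows essentially the same route as the paper's: a contrapositive argument in which the honest watch tower challenges any invalid claimed root, and completeness of the fraud proofs/refereed games together with parent-chain liveness guarantees the challenge is verified within the pending window $\Tfraud$ (chosen larger than $T$ and the game duration), so any root that survives to acceptance must be valid. Your explicit induction over the block prefix, the union bound, and the discussion of spamming merely make explicit bookkeeping that the paper leaves implicit, without changing the underlying argument.
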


\begin{proof}
If a state root claimed for a new rollup block and input to the contract is invalid, an honest watch tower immediately creates a fraud proof or challenges the new root with a refereed game (\cf Section~\ref{sec:rollup-types}).
The fraud proof is then received and verified by the contract within $T$ slots.
Similarly, the game terminates after at most some $T_{\mathrm{game}}$ slots, with the contract verifying the instruction at the source of disagreement.
By the completeness of the fraud proofs and the refereed games, the fraud proof convinces the contract of the root's invalidity, and the honest watch tower is found truthful by the contract.
Recall that the contract of an optimistic rollup waits for $\Tfraud$ slots upon receiving the claimed state and transaction roots for a new rollup block.
Thus, by setting $\Tfraud > T, T_{\mathrm{game}}$, the contract ensures that every accepted state root is valid with respect to the transactions within the latest rollup block and its prefix with overwhelming probability.
\end{proof}

\begin{lemma}[State Validity for Sovereign Optimistic Rollups]
\label{lem:validity-sovereign-optimistic}
Suppose the network is synchronous, there is an honest watch tower at all times, and the refereed games of the sovereign optimistic rollup is complete and sound.
Then, if a state root claimed by a full node for a new rollup block is accepted by a light client, with overwhelming probability, it is the valid state root corresponding to the sequence of transactions within the latest rollup block and its prefix.
\end{lemma}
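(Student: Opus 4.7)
The plan is to follow the same structure as the proof of Lemma~\ref{lem:validity-enshrined-optimistic}, but replace the role of the bridge contract as the referee with the light client itself, and invoke the stronger connectivity assumption (an honest watch tower connected to \emph{every} light client) that is specific to sovereign optimistic rollups. First, I would suppose toward contradiction that at some slot the light client accepts a claimed state root $\stc'$ for a new rollup block, yet $\stc'$ is not the valid state root corresponding to the transactions of the latest rollup block and its prefix.

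Next, I would invoke the synchrony assumption and the assumption that an honest watch tower is connected to this particular client. Since the watch tower computes the canonical rollup chain from the ordered, available rollup transactions on the parent chain (per Section~\ref{sec:merged-consensus}), it detects the discrepancy between $\stc'$ and the correct state root, and initiates a refereed bisection game against the full node defending $\stc'$. I would then appeal to the completeness of the refereed game to conclude that the honest watch tower is declared truthful by the client-referee, so the client rejects $\stc'$; conversely, by soundness of the refereed game, if an adversarial full node were defending the (correct) valid root's negation, it could be declared truthful only with probability negligible in $\lambda$. Combining the two yields that with overwhelming probability the client cannot accept any invalid $\stc'$, contradicting the supposition.

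A subtlety I would explicitly address, which is the main departure from the enshrined case, is the absence of a bridge contract to enforce a uniform waiting period $\Tfraud$: the client itself must delay acceptance until the bisection game between its connected provers terminates. I would argue that because the refereed game terminates within a bounded number of slots $T_{\mathrm{game}}$ (determined by the bisection depth and $\Delta$) and the honest watch tower is always available to the client, the client can safely wait out this interval before finalizing a root, so no acceptance occurs while a live challenge is pending. The hardest part of the write-up is being careful that the soundness statement for the refereed game is invoked correctly: we need that no PPT adversary can cause the \emph{light client} (as referee) to accept an invalid root, which is exactly the soundness property assumed; the union bound over the polynomially many games played throughout the execution keeps the overall error negligible in $\lambda$.
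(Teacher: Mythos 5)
Your argument is sound in spirit, but it takes a genuinely different route from the paper: the paper's entire proof of this lemma is a one-line deferral to the prior work on sovereign-rollup light clients, namely \cite[Theorem 4]{lazylight}, on the grounds that the construction there is the only known succinct light client for a sovereign optimistic rollup. You instead reconstruct a direct argument by transplanting the proof of Lemma~\ref{lem:validity-enshrined-optimistic}, with the light client replacing the bridge contract as referee and the per-client honest-prover assumption replacing the global watch tower. What your version buys is self-containedness and an explicit treatment of the timing issue (the client must wait out $T_{\mathrm{game}}$ rather than a contract-enforced $\Tfraud$), which the paper leaves implicit. What the citation buys, and what your write-up glosses over, is the \emph{tournament} structure of the actual protocol: the client connects to several provers and mediates pairwise bisection games among all disagreeing parties, not a single game against the one node defending $\stc'$. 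To make your direct argument airtight you would need one more step: per-game completeness guarantees the honest prover is found truthful in \emph{every} game it plays, so its (valid) root survives the tournament, and any conflicting invalid root must be held by a party that loses to the honest prover and is therefore eliminated before the client accepts; colluding adversarial provers playing sham games among themselves cannot circumvent this because acceptance requires surviving the game against the honest prover's root. With that addition (and your union bound over the polynomially many games), your argument matches what \cite[Theorem 4]{lazylight} establishes. One small imprecision: soundness as defined in the paper concerns an adversary \emph{accusing a valid root} being found truthful; for this validity lemma the load-bearing property is completeness (the honest accuser of an invalid root always wins), while soundness is primarily what the liveness proof needs.
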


Since the only known light client protocol for sovereign optimistic rollups is given by \cite{lazylight}, validity of accepted state roots follows from \cite[Theorem 4]{lazylight} under the given assumptions.

\begin{lemma}[State Validity for Enshrined ZK Rollups]
\label{lem:validity-enshrined-zk}
Suppose the proof system used by the ZK rollup is complete and sound.
Then, if a claimed state root for a new rollup block is accepted by a light client, it is the valid state root corresponding to the sequence of transactions within the latest rollup block and its prefix.
\end{lemma}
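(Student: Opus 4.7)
The plan is to argue that, in an enshrined ZK rollup, a state root is accepted by a light client only after it has been accepted by the bridge contract, and then to invoke soundness of the underlying proof system inductively along the chain of accepted rollup blocks.

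First, I would observe that a light client retrieves the latest state root directly from the bridge contract via Merkle proofs against finalized parent chain headers (\cf Section~\ref{sec:rollup-bridges}). Hence if the light client accepts a claimed state root $\stc$ for a new rollup block, then $\stc$ has been recorded as the latest accepted root by the bridge contract on some finalized parent chain block. By the rules of the contract (\cf Section~\ref{sec:rollup-types}), before recording $\stc$, the contract verified a validity proof $\pi$ whose public inputs are the previously accepted state root $\stc_{\mathrm{prev}}$, the new state root $\stc$, the new transaction root, and select data from the new rollup block posted to the parent chain.

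Next, I would set up an induction on the sequence of state roots $\stc_0, \stc_1, \ldots, \stc_n = \stc$ accepted by the contract, where $\stc_0$ is the commitment to $\genesisstate$. The base case is immediate since $\stc_0 = \left<\genesisstate\right>$ is valid by definition. For the inductive step, assume $\stc_{k-1}$ is the valid state root corresponding to the prefix ledger $\LOG_{k-1}$. The contract accepted $\stc_k$ only after verifying a proof $\pi_k$ on public inputs including $\stc_{k-1}$, $\stc_k$, and the transaction root $\mathsf{tr}_k$ of the $k$-th rollup block. By soundness of the proof system, except with probability negligible in $\lambda$, there exist a state $\st$ and a transaction sequence $\txs_k$ such that $\left<\st\right> = \stc_{k-1}$, $\mathsf{tr}_k$ commits to $\txs_k$, and $\stc_k = \left<\transition^*(\st,\txs_k)\right>$. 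The binding property of the state-root commitment (assumed in Section~\ref{sec:security-definition}) forces $\st = \transition^*(\genesisstate,\LOG_{k-1})$, so $\stc_k$ is exactly the valid state root corresponding to $\LOG_{k-1}$ concatenated with $\txs_k$, which is the ledger of the $k$-th rollup block and its prefix. A union bound over the at most polynomially many accepted roots keeps the overall error negligible.

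The main subtlety, and the only place I expect real care to be needed, is ensuring that the transaction sequence extracted from $\mathsf{tr}_k$ by soundness coincides with the transactions of the $k$-th rollup block as seen by the full nodes constructing the canonical rollup chain. This is handled by including the select transaction data of the new rollup block (posted in the clear on the parent chain) as public inputs to $\pi_k$, together with the binding property of $\mathsf{tr}_k$; together they tie the proof to the actual on-chain rollup data rather than to an alternative preimage. Once this link is in place, the induction closes and the lemma follows.
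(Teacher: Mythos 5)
Your proposal is correct and follows essentially the same route as the paper: the paper's proof likewise reduces light-client acceptance to contract acceptance, invokes soundness of the proof system on public inputs consisting of the previously verified state root, the new state and transaction roots, and the on-chain rollup data, and implicitly relies on the inductive fact that the previous root was "already verified by the contract." Your version merely makes explicit what the paper leaves implicit — the induction with its base case at $\left<\genesisstate\right>$, the union bound over polynomially many accepted roots, and the role of commitment binding — which is a faithful elaboration rather than a different argument.
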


\begin{proof}
By the soundness of the proof system, no PPT adversary can find a validity proof, except with negligible probability, such that the contract will output true given an invalid state root, the associated transaction root, the valid state root of the previous block (that was already verified by the contract), and the rollup data on the parent chain claimed to be committed by the transaction root.
Hence, the contract ensures that every accepted state root is valid with respect to the transactions within the latest rollup block and its prefix with overwhelming probability.
\end{proof}

\begin{lemma}[State Validity for Sovereign ZK Rollups]
\label{lem:validity-sovereign-zk}
Suppose the recursive proof system used by the ZK rollup is complete and sound.
Then, if a claimed state root for a new rollup block is accepted by a light client, it is the valid state root corresponding to the sequence of transactions within the latest rollup block and its prefix with overwhelming probability.
\end{lemma}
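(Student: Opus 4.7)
The plan is to mirror the proof of Lemma~\ref{lem:validity-enshrined-zk}, but adapt it to the sovereign setting where there is no bridge contract that has previously verified earlier state roots. First I would recall the structure of a sovereign ZK rollup as described in Section~\ref{sec:rollup-types}: the full nodes maintain a commitment to the entire transaction history (including the transactions in the latest rollup block) together with the current state, and they generate a single recursive SNARK that jointly attests, starting from the genesis state $\genesisstate$, to the consistency of the claimed state root with the claimed transaction history root. When a light client queries for the latest valid root, it receives the claimed state root, the claimed transaction history root, the select data from the parent chain, and the recursive proof, and verifies the proof using these as public inputs.

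Next I would invoke soundness of the recursive proof system. Suppose for contradiction that some PPT adversary $\mathcal{A}$ produces, with non-negligible probability, a proof that causes the light client to accept an \emph{invalid} state root for the new rollup block. By construction, the statement proven by the recursive SNARK asserts the existence of a transaction sequence $\overline{\tx}$ committed to by the claimed transaction history root, whose iterative application $\transition^*(\genesisstate, \overline{\tx})$ yields the state committed to by the claimed state root, and whose restriction to the latest block agrees with the select data from the parent chain. If the claimed state root is invalid, then no such $\overline{\tx}$ exists consistent with the rollup ledger extracted from the parent chain, so $\mathcal{A}$ directly yields an algorithm breaking soundness of the recursive proof system, contradicting the hypothesis. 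The binding property of the transaction history commitment (a standard assumption on the underlying commitment scheme, cf.\ Section~\ref{sec:security-definition}) ensures that the committed sequence $\overline{\tx}$ is unique and must coincide with the rollup ledger pinned down by the parent chain and the select data, so the accepted state root equals $\left<\transition^*(\genesisstate, \overline{\tx})\right>$ as required.

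The main obstacle (and the key conceptual difference from Lemma~\ref{lem:validity-enshrined-zk}) is that in the enshrined case one can argue inductively, using the fact that the contract has already accepted the parent block's state root as a trusted anchor for the soundness reduction; in the sovereign setting there is no such anchor available to the light client. The plan to circumvent this is to exploit the fact that the recursive SNARK proves the entire chain in a single succinct statement quantified from $\genesisstate$, so the soundness reduction is applied once to the \emph{whole} history rather than step by step. No inductive argument across blocks is needed, at the cost of relying on the recursive proof system (rather than a plain one) being sound; this is precisely the hypothesis of the lemma.
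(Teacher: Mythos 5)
Your proposal is correct and takes essentially the same route as the paper: the paper's proof is a brief invocation of the soundness of the recursive proof system applied to the whole-history statement (invalid state root, claimed transaction history root, and the committed parent-chain data), which is exactly your core argument. Your write-up is more detailed — in particular, you correctly note that the verification is done by the light client rather than a contract, and you make explicit the reliance on the binding of the transaction history commitment — whereas the paper's version of this proof still (apparently by oversight) refers to ``the contract''.
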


\begin{proof}
By the soundness of the proof system, no PPT adversary can find a validity proof except with negligible probability that will output true given an invalid state root, the associated root of the transaction history, and the rollup data on the parent chain claimed to be committed by the root of transaction history.
Hence, the contract ensures that every accepted state root is valid with respect to the transactions within the latest rollup block and its prefix with overwhelming probability.
\end{proof}

\begin{proof}[Proof of Theorem~\ref{thm:acc-safety}]
Suppose there is a safety violation on the rollup such that for two slots $t$ and $s$, and light clients (or full nodes) $i$ and $j$ awake at the respective slots, it holds that $\stc^i_t = \left<\transition^*(\genesisstate,\LOG_1)\right>$ for some rollup ledger $\LOG_1$, $\stc^j_s = \left<\transition^*(\genesisstate,\LOG_2)\right>$ for some rollup ledger $\LOG_2$, and $\LOG_1$ conflicts with $\LOG_2$.
Let $\ch_1$ and $\ch_2$ denote the parent header chains in $i$ and $j$'s views at slots $t$ and $s$.

Suppose $i$ and $j$ are full nodes.
Then, $\LOG_1$ and $\LOG_2$ are the rollup ledgers in $i$ and $j$'s views at slots $t$ and $s$, downloaded from the chains $\ch_1$ and $\ch_2$ respectively.
As these ledgers conflict with each other and full nodes download the same sequence of rollup transactions from the same parent chain block, $\ch_1$ and $\ch_2$ must also be conflicting with each other.
Now, since $i$ and $j$ check the finality of the parent chain blocks, and verify their availability via DAS, by the soundness of DAS, all parent chain blocks on $\ch_1$ in $i$'s view at slot $t$ and those on $\ch_2$ in $j$'s view at slot $s$ are finalized and valid.
However, this implies a safety violation on the parent chain per Definition~\ref{def:log-sec}.
As the parent chain provides accountable safety with resilience $f$, in this case, at least $f$ adversarial parent chain nodes are identified by a cryptographic proof as protocol violators (and no honest node is identified) with overwhelming probability.

Suppose $i$ and $j$ are light clients.
If the rollup is an enshrined rollup, $i$ and $j$ must have received $\stc^i_t$ and $\stc^j_s$ from the contract as determined by the chains $\ch_1$ and $\ch_2$ at slots $t$ and $s$.
By Lemmas~\ref{lem:validity-enshrined-optimistic} and~\ref{lem:validity-enshrined-zk}, under the respective assumptions for the enshrined rollups, the roots accepted by the contract are the valid state roots corresponding to the sequence of transactions within the latest rollup block and its prefix with overwhelming probability.
On the other hand, if the rollup is a sovereign rollup, $i$ and $j$ must have received $\stc^i_t$ and $\stc^j_s$ from the full nodes at slots $t$ and $s$.
By Lemmas~\ref{lem:validity-sovereign-optimistic} and~\ref{lem:validity-sovereign-zk}, under the respective assumptions for the optimistic rollups, the state roots accepted by the clients are the valid state roots corresponding to the sequence of transactions within the latest rollup block and its prefix.
Since the ordering of the rollup blocks is determined by the ordering of the rollup transactions on the parent chain, $\LOG_1$ and $\LOG_2$ must be the prefix of or the same as the rollup ledgers ingrained in $\ch_1$ and $\ch_2$ in both cases.
Then, by the same argument as the full nodes, $\ch_1$ and $\ch_2$ conflict with each other, and at least $f$ adversarial parent chain nodes are identified by a cryptographic proof as protocol violators (and no honest node is identified) with overwhelming probability.
\end{proof}

\begin{theorem}[Liveness for Rollups]
\label{thm:liveness}
Suppose the parent chain satisfies safety and liveness with some latency $T$.
Then, each rollup type satisfies liveness with some latency $\Tconf$ for both full nodes and light clients under the respective assumptions, with overwhelming probability.
\end{theorem}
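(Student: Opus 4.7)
The plan is to establish liveness separately for full nodes and for light clients, extracting an explicit $\Tconf$ for each of the four rollup types. For full nodes the argument is essentially direct. When a transaction $\tx$ is input to an honest rollup full node at time $t$, the node forwards it to the sequencer or posts it directly to the parent chain. By parent chain liveness, $\tx$ appears in the finalized parent chain ledger in every honest parent chain node's view by slot $t+T$. Since honest rollup full nodes only accept parent chain blocks that are both finalized (via header-chain validation) and available (via DAS), and DAS soundness plus agreement guarantee at most a further $\Delta$ delay, every honest rollup full node obtains $\tx$ in its rollup ledger $\LOG^j_s$ by some $s \leq t + T + \Delta$, thanks to the deterministic mapping from the parent chain transaction sequence to rollup blocks described in Section~\ref{sec:merged-consensus}. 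State liveness for full nodes then follows because $\stc^j_s = \left<\transition^*(\genesisstate,\LOG^j_s)\right>$, so the same $\Tfull = T + \Delta$ works.

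For light clients the argument branches by rollup type, but in each case I would use the same two-step recipe: first, show that honest full nodes produce and publish the correct state root together with the appropriate fraud/validity evidence within a bounded delay after the parent chain includes $\tx$; second, invoke the corresponding state-validity lemma to rule out a competing invalid root being accepted instead. Concretely, for an enshrined optimistic rollup, after time $t+T$ an honest full node posts the new state and transaction roots (and ISRs) to the bridge contract, which holds them pending for $\Tfraud$ slots; Lemma~\ref{lem:validity-enshrined-optimistic} ensures no invalid competitor is accepted and no honest watch tower will challenge the valid root, so a root whose underlying ledger contains $\tx$ is finalized by the contract by some slot $t + T + T_{\mathrm{post}} + \Tfraud$ and read off by light clients from their headers. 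For the enshrined ZK rollup the same template applies with the $\Tfraud$ window replaced by the validity-proof generation and verification time, and Lemma~\ref{lem:validity-enshrined-zk} providing validity. The two sovereign variants are analogous, except that the light client pulls the root directly from honest full nodes: in the sovereign optimistic case I would cite the light client liveness guarantee of the refereed-game protocol of~\cite{lazylight} together with Lemma~\ref{lem:validity-sovereign-optimistic}; in the sovereign ZK case honest full nodes extend their recursive SNARK to cover the transaction history that includes $\tx$, and light clients accept by Lemma~\ref{lem:validity-sovereign-zk}.

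The main obstacle is largely bookkeeping: each rollup type contributes its own protocol-specific delay, so $\Tconf$ must be taken as the maximum over $T + \Delta$ (for full nodes), $T + T_{\mathrm{post}} + \Tfraud$ (enshrined optimistic), $T + T_{\mathrm{zk}}$ (enshrined ZK), and the analogous sovereign quantities. The subtler point is the sovereign optimistic case, where progress depends on a worst-case bound on the refereed-game tournament against adversarial provers who may spuriously dispute a valid root; rather than re-deriving such a bound, I would appeal directly to the liveness guarantee of~\cite{lazylight}, keeping the proof modular and letting the four cases assemble into a single $\Tconf$ at the end.
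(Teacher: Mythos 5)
Your proposal is correct and follows essentially the same route as the paper: parent-chain liveness places $\tx$ in every honest full node's rollup ledger within $T$ slots, and then each of the four rollup types contributes its own bounded additional delay ($\Tfraud$ for the optimistic variants, validity-proof generation and verification for the ZK variants, and the guarantees of~\cite{lazylight} for the sovereign optimistic case), all folded into a single $\Tconf$. The only point worth tightening is that in the optimistic cases the paper invokes the \emph{soundness} of the fraud proofs/refereed games directly---so that no adversary can successfully challenge the honest node's valid root during the $\Tfraud$ window---whereas the state-validity lemmas you cite conclude only that accepted roots are valid (the safety direction), which does not by itself guarantee that the valid root is the one accepted.
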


\begin{proof}[Proof of Theorem~\ref{thm:liveness}]
Since the parent chain is safe and live with latency $T$, any rollup transaction $\tx$ sent by a user (\eg, sequencer or client) at slot $t$ will be included and stay in the ledger of all honest parent chain nodes by slot $t+T$.
Thus, it will be downloaded and finalized as part of the rollup chain by all honest rollup full nodes by slot $t+T$.
Afterwards, if the rollup is an enshrined rollup, an honest full node will submit the valid state and transaction roots of the new rollup block containing $\tx$ to the bridge contract.
These roots will be received by the contract by slot $t+2T$.
Similarly, if the rollup is a sovereign rollup, an honest full node will send the valid state and transaction roots of the new rollup block containing $\tx$ to the light clients over the network, which will be received by slot $t+T+\Delta$.

If the rollup is an enshrined optimistic rollup, by the soundness of the fraud proofs, no PPT adversary will be able to generate a fraud proof that will convince the contract that the state root is invalid, except with negligible probability in $\lambda$.
Similarly, by the soundness of the refereed games, no PPT adversary accusing the valid state root will be found truthful by the contract except with negligible probability.
Hence, by slot $t+2T+\Tfraud$, the state root of the new rollup block containing $\tx$ will be accepted by the contract and the light clients.
If the rollup is an enshrined ZK rollup, the full node sending the state root will attach a validity proof that will be verified by the contract due to the completeness of the proof system.
Hence, the state root of the new rollup block containing $\tx$ will be accepted by the contract and the light clients at slot $t+2T$.

If the rollup is a sovereign optimistic rollup, by \cite[Theorems 3,4]{lazylight}, the light client will accept the state root handed by the honest full node by slot $t+T+\Delta+\Tfraud$.
If the rollup is an sovereign ZK rollup, the full node sending the state root will attach a validity proof that will be verified by the contract with respect to the transaction history of the rollup due to the completeness of the recursive proof system.
Hence, the state root of the new rollup block containing $\tx$ will be accepted by the light client at slot $t+T+\Delta$.
Consequently, if $\tx$ is input to an honest node at time $t$, which sends it to the parent chain, then for any slot $s \geq t + \Tlight$, where $\Tlight \geq t+2T+\Tfraud+\Delta$, for any light client or honest full node $i$ awake at slot $s$, there exists a ledger $\LOG$ such that $\stc^i_s = \left<\transition^*(\genesisstate,\LOG)\right>$ and $\tx \in \LOG$, implying the liveness of the rollup.
\end{proof}


\section{Analysis}
\label{sec:analysis}

\subsection{Comparison of Security between Optimistic and ZK Rollups}
\label{sec:comparison-security}

All rollups require the liveness, more specifically, censorship-resistance of the parent chain for the liveness of the rollup.
However, they differ in their assumptions for safety of the rollup, besides requiring the safety of the parent chain: while the validity proofs of ZK rollup ensure the safety of the state as long as the proof system is complete and sound (\ie, cryptographic assumptions), optimistic rollups require both the existence of an honest watch tower as well as the liveness of the parent chain.

An oft-overlooked issue with the assumption of an online watch tower is the denial-of-service (DoS) risk for these nodes.
Without network anonymity, attackers can observe the transactions downloaded by the nodes and identify those downloading the rollup specific ones as watch towers.
This makes DoS resistance is an inherent assumption for the availability of watch towers on optimistic rollups.
Appendix~\ref{sec:appendix-incentives} gives an in-depth discussion of the supportive and disruptive roles incentives play regarding the assumptions for the safety of optimistic rollups.

As for ZK rollups, their security often depend on the security of the trusted setup.
Whereas Groth 16'~\cite{groth} requires a trusted setup with at least one honest participant for each new smart contract, Sonic~\cite{sonic}, Plonk~\cite{plonk} and Marlin~\cite{marlin} introduce a single universal setup, which can be updated for new smart contracts.
Finally, STARKs~\cite{stark}, Halo~\cite{halo}, Supersonic~\cite{supersonic} and Fractal~\cite{fractal} remove the trusted setup requirement altogether, eliminating the need for an honest participant.
Another concern for ZK rollups is the soundness of the novel cryptographic assumptions underlying the verifiable computation techniques as opposed to the more battle-tested cryptography (\eg, cryptographic hash functions) of optimistic rollups.

\subsection{Throughput and Scalability}
\label{sec:analysis-throughput}

We next formalize throughput and scalability for rollups.
Let $g_\transition(\st,\tx)$ denote the gas cost, more generally, the \emph{cost function} of applying a transaction $\tx$ to a state $\st$ using the transition function $\delta$.
If a rollup full node executes transactions at a \emph{processing speed} of $C$ gas per time slot, it applies $\tx$ to $\st$ in $g_\transition(\st,\tx)/C$ expected slots.
The processing speed quantifies the cost of running a full node, and its value depends on the node's processor speed, the upload and download rates of its I/O ports, or other resource bottlenecks.

Consider a sequence of rollup transactions $\tx_1,\tx_2,\ldots,\tx_n$ of the same type and the states $\st_i:=\transition(\st_{i-1},\tx_i)$, $i=1,..,n$.
Suppose each full node has the same processing speed $C$.
Let $\Th$ denote the throughput of the parent chain measured for the rollup transactions.
\begin{definition}
\label{def:throughput}
The rollup throughput is defined as the number of transactions processed by the full nodes per unit time:
\begin{IEEEeqnarray*}{C}
\label{eq:throughput}
\lim_{n \to \infty} \min\left(\frac{n}{\sum_{i=1}^n g_\transition(\st_{i-1},\tx_i)/C}, \Th \right) = \lim_{n \to \infty} \min\left(C \cdot \frac{n}{\sum_{i=1}^n g_\transition(\st_{i-1},\tx_i)}, \Th \right)
\end{IEEEeqnarray*}
where $g_\transition(\st_{i-1},\tx_i)/C$ is the expected time to process $\tx_i$.
\end{definition}

In practice, the processing speed of the Ethereum nodes is often restricted by execution (\eg, the rate of SSD accesses on commodity hardware) rather than consensus or data availability for the transactions on the parent chain~\cite{vitalik-tweet-1,vitalik-tweet-2}.
Moreover, removing the consensus bottleneck on throughput has been studied extensively, leading to the development of SMR protocols~\cite{bitcoin-ng,prism-ccs,narwhal-tusk} that achieve high transaction throughput without sacrificing latency or decentralization.
Hence, in the subsequent analysis, we assume that throughput is primarily bottlenecked by execution, and equals $\lim_{n \to \infty} C \cdot (n / \sum_{i=1}^n g_\transition(\st_{i-1},\tx_i))$.

The expression above can be further partitioned into two components: a \emph{vertical scalability} term $C$, and a \emph{horizontal scalability} term $n/\sum_{i=1}^n g_\transition(\st_{i-1},\tx_i)$.
Here, horizontal scalability refers to increasing throughput while keeping the processing speed constant, and is the main improvement enabled by rollups.
\begin{definition}
\label{def:scalability}
Horizontal scalability (scalability for short) is defined as throughput over processing speed:
\begin{IEEEeqnarray*}{C}
\lim_{n \to \infty} \frac{n}{\sum_{i=1}^n g_\transition(\st_{i-1},\tx_i)}.
\end{IEEEeqnarray*}
\end{definition}
Although the blockchains can require higher processing speeds $C$ for higher throughput, enforcing a large $C$ might prevent parties with commodity hardware from running full nodes.
This, in turn, results in the centralization of execution, requiring more trust in a select number of nodes and potentially giving them leverage over other system participants~\cite{blog-oru}.
By improving the horizontal scalability term, rollups enhance throughput without causing centralization.

\paragraph{Removing the Scalability Bottleneck.}
There are two ways rollups increase scalability:
(i) using a system with a smaller cost function, and
(ii) sharding the state into multiple rollups.

Rollups allow the use of execution models better suited to specific applications.
For instance, as UTXO-based execution minimizes SSD access for full nodes, it is cheaper to use a UTXO model for a rollup that is solely for payments.
This leads to a smaller cost function, $g_1(.)$, for the rollup compared to the function $g_0(.)$ of the parent chain: $g_1(\st_{i-1}, \tx_i) \leq g_0(\st_{i-1}, \tx_i)$.
Different execution model also provides a platform to experiment with algorithmic improvements such as stateless clients~\cite{ethereum-stateless}, and better parallelization of execution~\cite{github-easy-parallelizability}.

Rollups also enable full nodes to process only those transactions relevant to their application~\cite{adler2020building}.
Suppose there are $m$ rollups with the transactions $\tx^j_i$, where $j=1,..,m$ denotes the id of the relevant rollup.
Rollup $j$ has the cost function $g_j$ and the states $\st^j_i$.
These rollups share the same parent chain with the cost function and states denoted by $g_0(.)$ and $\st^0_i$ respectively, where we assume $g_j(\st^j_{i-1}, \tx^j_i) \leq g_0(\st^0_{i-1}, \tx^j_i)$ for all $j=1,\ldots,m$ as argued above.
Then, by sharding the execution into these $m$ rollups, scalability can be improved from
\begin{IEEEeqnarray*}{C}
\lim_{n \to \infty} \frac{n \cdot m}{\sum_{i=1}^n \sum_{j=1}^m g_0(\st^j_{i-1},\tx^j_i)} \text{ to } \lim_{n \to \infty} \frac{n \cdot m}{\sum_{i=1}^n \max_{j=1,..,m} g_j(\st^j_{i-1},\tx^j_i)},
\end{IEEEeqnarray*}
with a factor of at least $m$.
Note that the expression $\max_{j=1,..,m} g_j(\st^j_{i-1},\tx^j_i)$ is due to the rollup nodes executing \emph{only} their rollup transactions, thus making the slowest rollup the new bottleneck of the system, whereas without rollups, \emph{all} nodes processed \emph{all} transactions on the parent chain. 

Despite its performance benefits, sharding of execution into rollups comes with the loss of composability among applications.
To mitigate this, full nodes can run light clients of other rollups to learn relevant parts of the other states.

\subsection{Latency}
\label{sec:appendix-latency}

Suppose the parent chain is secure with latency $T$ per Definition~\ref{def:log-sec}.
Then, once a rollup transaction is input an honest full node (\eg, the sequencer) that submits it to the parent chain, it appears in the rollup ledgers of all honest full nodes within $T$ slots.
Assuming that the execution time is negligible compared to $T$, latency for rollup full nodes is thus determined by the latency of the parent chain.

In an optimistic enshrined rollup, the state roots received by the contract remain in a pending state until sufficient time passes, during which no fraud proof or bisection game invalidates the block (otherwise, they are deemed invalid).
We assume that whenever there is one honest watch tower and the network is synchronous, the contract and the light clients learn the invalidity of a root posted to the contract within $\Tfraud$ slots.
The exact value of $\Tfraud$ depends on the delay $\Delta$ of the network, the finalization latency of the parent chain and the availability of watch towers.
Consequently, to ensure state security per Definition~\ref{def:state-security}, light clients must wait for the $\Tfraud$ window before finalizing a state root on the contract (making the total latency approximately $T+\Tfraud$ for light clients on an optimistic enshrined rollup).
As $\Tfraud$ scales in the finalization latency of the parent chain, it should be large enough to grant the community enough time to resolve the attacks or bugs that cause censorship of fraud proofs.
Thus, its estimates tend to be on the order of weeks (\eg, one week on Arbitrum~\cite{arbitrum-delay}) despite the fact that fraud proofs can be created and included within the parent chain faster under normal circumstances.

For sovereign optimistic rollup, assuming a $\Delta$ synchronous network and the presence of at least one honest prover among the peers of every light client, we again denote the latency of the refereed game by $\Tfraud$~\cite{lazylight}.
Here, $\Tfraud$ need not depend on the finalization latency of the parent chain since the game is played over the network, enabling latency much less than a week (\cf the paper~\cite{lazylight} for the numbers).
However, it might be more susceptible to network level and eclipse attacks as argued in Section~\ref{sec:rollup-types}.

Recall that a client withdrawing from the rollup waits until the root of the rollup block containing its burn transaction is validated by the bridge contract (\cf Section~\ref{sec:rollup-bridges}).
As this implies a withdrawal delay of $\Tfraud$ in an optimistic rollup, light clients with urgent withdrawal requests might utilize liquidity providers (\eg, Hop~\cite{hop}, Connext~\cite{connext}, Dai Bridge~\cite{dai}) to atomically withdraw or transfer their funds~\cite{ethresearch-side-chain-halting,article-benjamin-simon}.

For ZK rollups, assuming that the proof size and its verification time are succinct, there is no pending root in the bridge contract or long delays before accepting these roots.
Hence, without the delay of executing the rollup transactions and calculating the validity proofs, the latency for the light clients matches those of the full nodes, \ie, $\sim T$.
However, the fixed costs for ZK provers often require them to wait until enough transactions are accumulated to fill up a rollup block, introducing additional delays before state roots can be posted to the contract~\cite{medium-tradeoff}.
Moreover, generating a validity proof for rollup blocks often take longer than simply calculating the next state root.
Despite such disadvantages for the execution latency of ZK rollups, future developments in cryptography could bring prover times down, making verification of state roots as fast for the light clients as full nodes in the future.

\subsubsection{Sequencer Selection.}
For user experience, it is crucial to select an honest sequencer that does not censor transactions.
A common feature to all selection algorithms is the use of the parent chain as the medium of agreement.
Such algorithms include holding periodic auctions on the parent chain (\eg, burn auctions~\cite{ethresearch-burn-auction}), using VRFs or random number generators (\eg, RANDAO~\cite{RANDAO}) for rotating sequencers, and selecting the sender of the first transaction in the last rollup block on the parent chain as the sequencer (\cf~\cite{vitalik-rollup,adler2020building,ethresearch-merged-consensus}).
The key requirement for the selection method is Sybil resistance.
For example, in burn auctions, Sybil resistance is ensured by the burning of coins~\cite{eip-1559}.

\subsection{Incentives for the Safety of Optimistic Rollups}
\label{sec:appendix-incentives}

In an optimistic enshrined rollup, security of light clients depends on (i) the existence of an honest watch tower at all times, and (ii) the censorship resistance (liveness) of the parent chain.

One solution to ensure the continuous presence of watch towers is to designate a special set of nodes for this task. 
However, this makes watch towers a central point of failure, vulnerable to bribing and network level attacks.
Rollup applications and liquidity providers might also act as watch towers to avoid losses due to invalid states accepted by the contract.
However, verifier's dilemma shows that, even with rewards offered for detecting invalid state roots or punishments for them, constant verification does not necessarily maximize the utility of stakeholders under any probability of attack by dishonest full nodes due to computational costs~\cite{verifiers-dilemma}.
Thus, to guarantee the presence of watch towers, it is necessary to introduce external incentives that reward verification, or punish inaction, regardless of whether the activity resulted in the detection of an invalid root or not~\cite{medium-incentives}.
For instance, Arbitrum~\cite{arbitrum} and TrueBit~\cite{truebit} increase the watch towers' payoff gap between the verify and non-verify strategies by a factor that does \emph{not} depend on the probability with which invalid roots are posted to the contract.
Arbitrum introduced attention challenges that force verification of roots at all times with the threat of a penalty, whereas Truebit forces full nodes to occasionally post invalid roots to the contract to provide additional incentives to verifiers.
Although each proposal has its own drawbacks and mitigation strategies
(\cf~\cite{medium-incentives} for a comprehensive analysis), they indicate that verifier's dilemma can be mitigated to ensure the presence of honest watch towers.

Incentives on cryptocurrencies is a rich field~\cite{azouvi2020sok}, and it is not clear how the existence of rollups affect the existing incentives for ensuring the protocol's liveness.
For instance, presence of fraud proofs might prompt a proliferation of anti-mechanisms such as bribes payed out to parent chain nodes to censor these proofs~\cite{ethresearch-censor}.
By contrast, nodes enforcing censorship might make themselves more susceptible to DoS attacks~\cite{hacking-censor-resist} and retaliation by honest nodes, implying that in the long run, the nodes might refuse to censor transactions~\cite{moroz2020doublespend}.
Similarly, even in the presence of a $51\%$ attack, validators of PoS Ethereum can invoke \emph{social consensus} to create a `minority soft fork', which financially punishes the censoring majority with slashing conditions such as inactivity leak~\cite{ethresearch-51-attack}.

\subsection{Privacy}
\label{sec:appendix-privacy}

Rollups can help the adoption of privacy preserving computation techniques such as Zexe~\cite{zexe} as experimental platforms.
A current example of a privacy-preserving rollup is Aztec~\cite{aztec}.
With the advance of fully homomorphic encryption, privacy preserving computation techniques can become mainstream among rollups in the future.

\subsection{Deployment}
\label{sec:appendix-deployment}

Capabilities of the execution on rollups affect their adoption by the existing developer community on decentralized systems such as Bitcoin, Ethereum, Polkadot and Cosmos.
For example, existence of EVM compatible execution can facilitate the transfer of existing smart contracts on Ethereum to the rollup, contributing to the community's shift from the parent chain to rollups as the medium of execution.
There are optimistic (\eg Arbitrum~\cite{arbitrum}) rollups providing EVM-compatible execution, and many ZK rollups such as zkSync~\cite{zksync}, Polygon Hermez~\cite{hermez} and Scroll~\cite{scroll} are on track to support smart contracts in Solidity.


\iftoggle{fullversion}{
\paragraph{\bf Acknowledgments.}
We thank Ben Fisch, Mohammad Ali Maddah-Ali and Dionysis Zindros for several insightful discussions on this project.

ENT was supported by the Stanford Center for Blockchain Research.
}{}


\bibliographystyle{ieeetr}
\bibliography{references}





\end{document}